\newcommand{\ee}{\mathrm{e}}
\newcommand{\R}{\mathbb{R}}
\newcommand{\OO}{\mathcal{O}}
\newtheorem{claim}{Claim}[section]
\newtheorem{proposition}[claim]{Proposition}
\newtheorem{remarks}[claim]{Remarks}
\begin{document}

\title{Quantum graph resonances by cut-off technique}

\author{Pavel Exner}
\address{{Doppler Institute for Mathematical Physics and Applied Mathematics, Czech Technical University,
B\v rehov{\'a} 7, 11519 Prague, Czechia} {\rm and} {Department of Theoretical Physics, Nuclear Physics Institute, Czech Academy of Sciences, 25068 \v{R}e\v{z} near Prague, Czechia}
}
\email{exner@ujf.cas.cz}

\author{Ji\v{r}\'{\i} Lipovsk\'{y}}
\address{{Department of Physics, Faculty of Science, University of Hradec Kr\'alov\'e, Rokitansk\'eho 62, 50003 Hradec Kr\'alov\'e, Czechia}
}
\email{jiri.lipovsky@uhk.cz}

\author{Jan Peka\v{r}}
\address{{Institute of Theoretical Physics, Faculty of Mathematics and Physics, Charles University, V Hole\v{s}ovi\v{c}k\'ach 2, 18040 Prague, Czechia}
}
\email{honzapekar28@gmail.com}

\begin{abstract}
We demonstrate how resonances in a quantum graph consisting of a compact core and semi-infinite leads can be identified from the eigenvalue behavior of the cut-off system.
\end{abstract}

\maketitle

\section{Introduction}

Quantum graphs have a long history; the concept was proposed by Linus Pauling \cite{Pa36} in the 1930s but then happily forgotten for about half a century. It was rediscovered in connection with progress of solid-state fabrication techniques and it flourishes since then being useful not only as a model of numerous nanostructures, but what is probably more important, also as a versatile tool to address various theoretical questions in quantum mechanics; we refer to the monographs \cite{PPP05, BK13, KN22, Ku24} for introduction to the field and an extensive bibliography.

The second notion in the title, the resonances, concerns a wide family of effects the importance of which certainly need not be advocated. There are various methods to identify resonances in quantum systems. Two most frequently appearing ones are finding the analytical continuation of the Hamiltonian resolvent explicitly, as is the case in the classical Friedrichs model \cite{Fr48} and its numerous generalizations, see \cite{Ex13} for a review, and the complex scaling method \cite{AC71, BC71} with application particularly in molecular physics \cite{Si79}.

One of the main mechanisms of resonance production, as known already from the Friedrichs model, are perturbation of eigenvalues embedded in the continuous spectrum. Such resonances are ubiquitous in quantum graphs where the unique continuation property does not hold \cite[Sec.~3.4]{BK13}, and as a result, one can have compactly supported eigenfunctions which 'do not talk' to the rest of the graph, producing then resonances by geometric perturbations \cite{EL10}. In this connection it is useful to recall that there are different resonance notions, in particular, the resolvent and scattering ones. The two are not the same in general, the former being the property of the Hamiltonian alone, while the latter involves comparison of the full Hamiltonian to the free one. However, in many situations the two notions coincide -- naturally up to the set of embedded eigenvalues not influenced by the scattering -- which is also the case for quantum graphs \cite{EL07, Lip16}.

To derive the resonance condition in this case may require, especially for more complicated graphs, a not so simple algebra, hence alternative methods could be useful. In this connection an approach popular in physics, usually dubbed `resonance in a box' \cite{MCD80, ZMZ09} or `stabilization' \cite{HT70, DS21}, comes to mind, in which the search for complex poles is replaced by spectral analysis of the system confined to a bounded region and an inspection of the dependence of the eigenvalue behavior on the size of this `box'. The intuition behind this method is clear, however, in most cases a mathematically convincing justification is missing; the only case where the connection is rigorously established is the one-dimensional potential scattering \cite{HM00}. Our aim here is to demonstrate how quantum graph resonances can be identified using such a technique.

\section{Preliminaries}\label{s:prelim}

Let us first recall the problem setting. Quantum graph is a common shorthand for the Laplacian, or more generally a Schr\"odinger operator, on a metric graph, which we denote as $\Gamma$. The latter consists of a set of vertices, $\mathcal{V} = \{v_j\}$, a set $\mathcal{E} = \{e_i:\, i= 1, \dots, N\}$ of finite edges of lengths $l_i>0$ connecting the elements of $\mathcal{V}$, together with a set of semi-infinite edges, $\mathcal{E}_{\infty} = \{e_{\infty, j}:\, j = 1, \dots, M\}$, often called \emph{leads}, attached to some or all of the vertices. The edges are identified with line segments which allows us to consider differential operators on $\Gamma$. For the sake of simplicity we suppose here that external fields are absent so that the dynamics on $\Gamma$ is governed by the Hamiltonian which acts as $-\frac{d^2}{dx^2}$ on each edge; we neglect physical constants the values of which play no role in our considerations. To make such an operator self-adjoint, one has to impose appropriate condition matching the wave function at the vertices. It is well known that the most general one are
\begin{equation}\label{Vert_con_ind}
		(U_j-I){\psi}_j(v)+i(U_j+I){\psi^{'}_j}(v)=0,
\end{equation}
at the vertex $v_j$, where $\psi_j$ and $\psi'_j$ are vectors of the one-sided limits of wave function values and their derivatives, respectively, at the vertex $v_j$, and $U$ is a $\mathrm{deg}\,v_j \times \mathrm{deg}\,v_j$ unitary matrix. These conditions are usually attributed to Kostrykin and Schrader \cite{KS99} but in fact they were known already to Rofe-Beketov \cite{RB69}. The operator domain then consists of elements of the Sobolev space $W^{2,2}(\mathcal{E} \bigoplus \mathcal{E}_\infty)$ satisfying condition \eqref{Vert_con_ind}.

It is important for our discussion that we do not need to specify the topology of $\Gamma$; we may assume that all the vertices are merged to produce a `flower' graph \cite{EL10} with $N$ `petals' \cite[Sec.~1.4.6]{BK13} and $M$ semi-infinite `stems', all meeting at a single point, where the condition
\begin{equation}\label{Gen_vert_con_flower}
		(U-I){\Psi}(v)+i(U+I){\Psi^{'}}(v)=0
\end{equation}
with a $(2N+M)\times(2N+M)$ unitary matrix $U$ is imposed; the local character of the wave function matching is restored by choosing $U$ of a block-diagonal form in which each block corresponds to a vertex of the original graph $\Gamma$.

One way to find the resonances is to use an exterior complex scaling which acts on the leads leaving the compact core of $\Gamma$ intact. We have to choose appropriately the wave function Ansatz. Since we assume that the dynamics is governed by the Laplacian, we use $f_i(x) = a_i\sin{kx} + b_i\cos{kx},\, i=1,\dots,N$, while on the external ones we have plane waves which behave as $g_{j\theta}(x) = \ee^{\theta/2}g_j(xe^{\theta}),\, j=1,\dots,M$ under the scaling \cite{EL07}, so that the corresponding boundary values are $g_j(0) = \ee^{-\theta/2}g_{j\theta}$ and $g^{'}_j(0) = ik\ee^{-\theta/2}g_{j\theta}$, respectively. Substituting into \eqref{Gen_vert_con_flower}, we get a system of $2N+M$ equations for the coefficients $a_i$, $b_i$ and $\ee^{-\theta/2}g_{j\theta}$ the solvability requirement of which is by \cite{EL07} equivalent to the condition
\begin{equation}\label{Resonance_det}
    \det[(U-I)C_1(k)+ik(U+I)C_2(k)] = 0
\end{equation}
where $U$ is the vertex condition matrix of \eqref{Gen_vert_con_flower} with the entries appropriately rearranged and the matrices $C_1$ and $C_2$ are given by the transfer matrices on the finite edges,
\begin{itemize}
    \item $C_1 = \begin{pmatrix}
                J_{2N\times2N}(k) & 0\\
			0 & I_{M\times M} \\
		\end{pmatrix}$, where $I_{M\times M}$ is the unit matrix and the $J(k)$ part is block-diagonal consisting of $2\times 2$ matrices $J_i(k)= \begin{pmatrix}
                0 & 1\\
			\sin{kl_i} & \cos{kl_i}\\
            \end{pmatrix}$,
     \item  $C_2 = \begin{pmatrix}
                K_{2N\times2N}(k) & 0\\
			0 & iI_{M\times M} \\
		\end{pmatrix}$, where $K(k)$ has a similar structure with the $2\times 2$ blocks of the form $K_i(k) =  \begin{pmatrix}
                1 & 0\\
			-\cos{kl_i} & \sin{kl_i}\\
            \end{pmatrix}$.
\end{itemize}
The problem can be further simplified to a system of $2N\times 2N$ equations by replacing $U$ with an effective energy dependent matrix $\tilde{U}(k)$ referring the compact core of $\Gamma$ alone \cite{EL10} but we will not do that because the main idea here is to replace the leads by finite edges changing naturally the `exterior' solutions $g_j$. With that in mind we note that, due to the choice of our Ansatz, the imaginary unit in the lower right block of $C_2$ is nothing but $\frac{1}{k}\frac{g^{'}_j(0)}{g_j(0)}$.

As a warm-up, let us now look at three examples of graph resonances two which were investigated in detail in \cite{EL10}. In the first one $\Gamma$ is a loop with two leads, sketched in Figure~\ref{Loop_graph}; we have two internal edges of lengths $l_1 = l(1-\lambda)$, $l_2 = l(1+\lambda)$ with the parameter $\lambda \in [0,1]$, coupled to two half-lines, each endpoint being connected to a lead. For the vertex conditions we choose a four-parameter family \cite{ES89} with the wave function continuous on the loop and tunable coupling to the leads,
\begin{equation*}
    \begin{aligned}
    f_1(0) &= f_2(0),\; f_1(l_1) = f_2(l_2),\\
    f_1(0) &= \alpha^{-1}_1(f^{'}_1(0)+f^{'}_2(0)) + \gamma_1g^{'}_1(0),\\
    f_1(l_1) &= -\alpha^{-1}_2(f^{'}_1(l_1)+f^{'}_2(l_2)) + \gamma_2g^{'}_2(0),\\
    g_1(0) &= \overline{\gamma}_1(f^{'}_1(0)+f^{'}_2(0)) + \tilde{\alpha}^{-1}_1g^{'}_1(0),\\
    g_2(0) &= -\overline{\gamma}_2(f^{'}_1(l_1)+f^{'}_2(l_2)) + \tilde{\alpha}^{-1}_2g^{'}_2(0).
    \end{aligned}
\end{equation*}
	\begin{figure}[b]\centering
		\includegraphics[width=.7\textwidth]{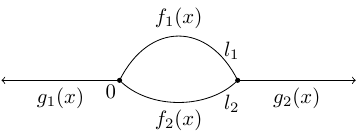}
		\caption{The loop graph with two external leads.}
		\label{Loop_graph}
	\end{figure}
The spectrum of the corresponding Hamiltonian is continuous covering $[0,\infty)$; depending on the coupling coefficients, it may have up to eight negative eigenvalues having a common symmetric restriction of deficiency indices $(8,8)$ with the graph in which the leads are decoupled from the loop \cite[Thm.~8.19]{We80}. Resonances come from embedded eigenvalues which occur when $l_1$ and $l_2$ are commensurate and the Laplacian on the loop has an eigenfunction with zeros at the vertices. The corresponding resonance condition \cite[eq.~(15)]{EL10} is
\begin{equation} \label{loop-res}
    \sin{kl(1-\lambda)}\sin{kl(1+\lambda)}-4k^2\beta^{-1}_1(k)\beta^{-1}_2(k)\sin^2{kl} + k[\beta^{-1}_1(k) + \beta^{-1}_2(k)]\sin{2kl} = 0,
\end{equation}
where $\beta^{-1}_j(k) := \alpha^{-1}_j + \frac{ik|\gamma_j|^2}{1-ik\tilde{\alpha}^{-1}_j}$. Next we replace $\Gamma$ with a finite graph which instead of the leads has edges of length $L$ assuming that at the `loose' endpoints Dirichlet condition is imposed. The spectrum of the Laplacian changes completely; it is now purely discrete. It is not difficult to find the eigenvalues using condition \eqref{Gen_vert_con_flower} with the Ansatz on the cut leads replaced by $g_j(x) = c_j\, \sin k(L-x)$; now, of course, no complex scaling is needed. The resulting spectral condition has the form \eqref{loop-res} again, however, with modified coefficients $\beta^{-1}_j(k)$, now equal to $\alpha^{-1}_j + \frac{-k|\gamma_j|^2\cot{kL}} {1+k\tilde{\alpha}^{-1}_j\cot{kL}}$, in other words, the imaginary unit in the previous expression of $\beta^{-1}_j(k)$ is replaced by $-\cot{kL}$.

The second example is a cross-shaped graph $\Gamma$ with two internal edges of lengths $l_1 = l(1-\lambda)$ and $l_2 = l(1+\lambda)$, the parameter $\lambda$ running again through $[0,1]$, and Dirichlet conditions imposed at their endpoints, and two semi-infinite leads, meeting in a single vertex, see Figure~\ref{Cross_graph}. The central vertex condition is now simpler; we choose it as the standard $\delta$ coupling with a strength parameter $\alpha\in\R$.
	\begin{figure}[b]\centering
		\includegraphics[width=.7\textwidth]{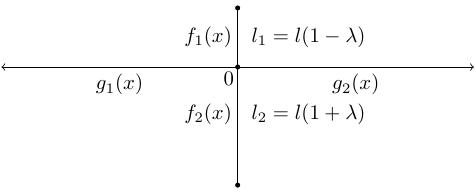}
		\caption{The cross-shaped graph.}
		\label{Cross_graph}
	\end{figure}
The resonance condition of this system \cite[eq.~(19)]{EL10} is
 \begin{equation}\label{Cross_cond_res}
     2k\sin{2kl} + (\alpha -2ik)(\cos{2kl\lambda}-\cos{2kl}) = 0.
 \end{equation}
On the other hand, the spectral condition of a cut-off cross-shaped graph with two `horizontal' edges of length $L$ and Dirichlet endpoints is easily seen to be
  \begin{equation}\label{Cross_cond_spec}
     2k\sin{2kl} + (\alpha +2k\cot{kL})(\cos{2kl\lambda}-\cos{2kl}) = 0,
 \end{equation}
that is, the imaginary unit of \eqref{Cross_cond_res} has been again changed into $-\cot{kL}$. Again, for large enough negative $\alpha$ the graph can also have a negative eigenvalue solving condition \eqref{Cross_cond_res} with the trigonometric functions replaced by hyperbolic ones.

The third example is a simple modification of the previous one in which $\Gamma$ is a T-shaped tree graph with three edges, the first two of lengths $l_1$ and $l_2$ with Dirichlet condition at the `loose' endpoints, the third one being a semi-infinite lead; at the vertex we again assume a $\delta$ coupling of strength $\alpha$. The resonance condition of this system is
 \begin{equation*}
     \big(\textstyle{\frac{\alpha}{k}}-i\big)\sin{kl_1}\sin{kl_2} + \sin{kl_1}\cos{kl_2} + \sin{kl_2}\cos{kl_1} = 0,
 \end{equation*}
while the spectral condition for the graph with the Dirichlet cut-off lead is
 \begin{equation*}
     \big(\textstyle{\frac{\alpha}{k}}+\cot{kL}\big)\sin{kl_1}\sin{kl_2} + \sin{kl_1}\cos{kl_2} + \sin{kl_2}\cos{kl_1} = 0.
 \end{equation*}

\section{Relating the cut-off graph spectra to resonances}\label{s:result}

Let us first observe that the correspondence we have seen in the examples is valid generally:
\begin{proposition}
Let $\Gamma$ consist of a compact part with $N$ edges and $M$ semi-infinite leads, and let $\Gamma_L$ be obtained from $\Gamma$ by cutting the leads to length $L$. The condition determining positive eigenvalues of the Laplacian on $\Gamma_L$ with coupling at $\mathcal{V}$ given by a matrix $U$ and Dirichlet conditions at the endpoints of the cut-off leads is of the form \eqref{Resonance_det} with the imaginary unit in $C_2$ replaced by $-\cot{kL}$.
\end{proposition}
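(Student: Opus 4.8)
The plan is to rerun the derivation that produced \eqref{Resonance_det} verbatim, changing only the Ansatz on the leads so as to accommodate the Dirichlet cut-off, and to track how this single change propagates into the matrix $C_2$. On the finite edges I keep $f_i(x) = a_i\sin kx + b_i\cos kx$, so that the one-sided values and derivatives at the merged vertex are produced by the same blocks $J_i(k)$ and $K_i(k)$ as before; in particular the upper-left parts of $C_1$ and $C_2$, which encode the compact core, are untouched by the surgery. On each cut-off lead I replace the scaled plane wave by $g_j(x) = c_j\sin k(L-x)$, which is the general solution of $-g_j''=k^2 g_j$ that automatically satisfies the Dirichlet condition $g_j(L)=0$ at the new endpoint.

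Next I would compute the vertex data of these exterior pieces: $g_j(0) = c_j\sin kL$ and $g_j'(0) = -c_j k\cos kL$. The decisive remark, already isolated in the text, is that the imaginary unit occupying the lower-right block of $C_2$ is nothing but the scaled logarithmic derivative $\frac{1}{k}\,g_j'(0)/g_j(0)$ of the exterior solution at the vertex, the outgoing-wave Ansatz of the resonance problem making this ratio equal to $i$. For the cut-off lead the very same ratio becomes
\begin{equation*}
\frac{1}{k}\,\frac{g_j'(0)}{g_j(0)} = \frac{1}{k}\,\frac{-c_j k\cos kL}{c_j\sin kL} = -\cot kL ,
\end{equation*}
so the lower-right block of $C_2$ is simply multiplied entrywise by $-\cot kL$ in place of $i$, while $C_1$ and all core blocks remain as in \eqref{Resonance_det}.

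To assemble the argument I would substitute the boundary values of $f_i$ and $g_j$ into the flower-graph matching \eqref{Gen_vert_con_flower}, obtaining a homogeneous linear system for the coefficient vector $(a_i,b_i,g_j(0))$ whose matrix is exactly $(U-I)C_1(k)+ik(U+I)C_2(k)$ with the indicated replacement in $C_2$. A nonzero function on $\Gamma_L$ meeting all vertex conditions and the Dirichlet conditions at the loose ends exists if and only if this system has a nontrivial solution, i.e. if and only if its determinant vanishes, which is precisely the asserted condition and reproduces the structure seen in all three examples.

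The one place calling for care, and the only (mild) obstacle, is the reparametrization by $g_j(0)$: when $\sin kL = 0$ the map $c_j \mapsto g_j(0)$ degenerates and $-\cot kL$ develops a pole. At such $k$ one should keep $c_j$ itself as the unknown, note that the determinant condition is recovered from the analytic system after clearing the factor $\sin kL$, and verify that the exceptional values $k=n\pi/L$ are either genuine eigenvalues — arising when the vertex conditions admit an exterior Dirichlet mode decoupled from the core — or removable, so that the stated form persists as an identity between meromorphic functions of $k$.
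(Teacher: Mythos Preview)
Your argument is correct and follows essentially the same route as the paper's proof: replace the exterior Ansatz by $g_j(x)=c_j\sin k(L-x)$ and observe that the Dirichlet-to-Neumann ratio $\frac{1}{k}g_j'(0)/g_j(0)$, which was $i$ for the outgoing wave, becomes $-\cot kL$. Your treatment is more detailed---in particular the care you take with the degenerate values $\sin kL=0$ is a welcome addition that the paper omits---but the core idea is identical.
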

\begin{proof}
As in the examples, we use the original Ansatz on the compact core of $\Gamma$ while on the cut-off edges we replace the exponential functions $g_j$ with  $g_j(x) = c_j\, \sin k(L-x),\, j=1,\dots,M$ in which case the Dirichlet-to-Neumann factor $\frac{g^{'}_j(0)}{g_j(0)}=-k\cot kL$ enters the equation in place of $ik$.
\end{proof}

\begin{remarks}
{\rm (a) The claim modifies easily if one uses a different boundary condition at the cut-off endpoints. Choosing, for instance, Neumann, one changes the exterior Ansatz to $g_j(x) = c_j\, \cos k(L-x),\, j=1,\dots,M$, and the imaginary unit in \eqref{Resonance_det} has to be replaced by $\tan{kL}$.} \\[.2em]
{\rm (b) We are interested in the positive eigenvalues of the cut-off graph Laplacian which we relate to the resonances of the original problem. As we have mentioned in the examples, however, for some vertex coupling matrices $U$ the graph with the leads can also have negative eigenvalues, $-\kappa^2$, which can be found using the Ansatz $g_j(x)=c_j\ee^{-\kappa x}$ on the leads. In a similar way, one can choose $g_j(x)=c_j\sinh\kappa(L-x)$ to determine eigenvalues of the cut-off graph. It is no surprise that the latter converge to those of the `full' graph as $L\to\infty$, the role of the imaginary unit in \eqref{Resonance_det} being now played by $-1$ and $-\cot kL$ replaced by $-\coth\kappa L$. By Dirichlet bracketing \cite[Sec.~XIII.15]{RS78}, the negative cut-off graph eigenvalues are decreasing with respect to $L$; note that they may not exist if $L$ is too small.}
\end{remarks}

Having reached this conclusion, we can turn to the question about relations between the resonances of the spectrum of the cut-off graphs. Let us denote the left-hand side of the condition \eqref{Resonance_det} as $F(k)$. The examples above are simple in the sense that the imaginary unit in the resonance condition was present only in the zeroth and first power and as such the transition to the spectral condition is straightforward. But this does not need to be the case - consider again the example of cross-shaped resonator, but now with the central vertex coupling \eqref{Vert_con_ind} referring to
\begin{equation*}
 U = \begin{pmatrix}
                \frac{3-i}{4}&\frac{-1-i}{4}&\frac{-1-i}{4}&\frac{-1-i}{4}\\[.3em]
			 \frac{-1-i}{4}&\frac{3-i}{4}&\frac{-1-i}{4}&\frac{-1-i}{4}\\[.3em]
              \frac{-1-i}{4}&\frac{-1-i}{4}&\frac{3-i}{4}&\frac{-1-i}{4}\\[.3em]
               \frac{-1-i}{4}&\frac{-1-i}{4}&\frac{-1-i}{4}&\frac{3-i}{4}
		\end{pmatrix}.
\end{equation*}
Its resonance condition given by \eqref{Resonance_det} reads
\begin{equation*}
    \sin k(l_1+l_2) + 4k \cos{kl_1} \cos{kl_2}+2i\cos{kl_1} \cos{kl_2} =0,
\end{equation*}
but in distinction to our previous observations the spectral condition of the cut-off graph is
\begin{equation*}
   \cot^2{kL}\,(\sin k(l_1+l_2) + 4k \cos{kl_1} \cos{kl_2})+2\cot{kL}\cos{kl_1} \cos{kl_2} =0.
\end{equation*}
One can construct other examples, even having the $i$ present in more than two distinct powers, but it is not difficult to see that the correspondence based on the exchange of the imaginary unit for $-\cot{kL}$ is one-to-one regardless of the specific structure of the resonance condition, and \emph{a priori} we do not have a way to determine the power of $i$ in the individual terms of the resonance condition except for the fact that in the real part $\mathrm{Re}\,F(k)$ the powers are even, while in the imaginary part $\mathrm{Im}\,F(k)$ they are odd. Still, we are able to discern a pattern. Let us rewrite the general spectral condition of the cut-off graph $\tilde{F}(k)$ as
\begin{equation*}
    \tilde{F}(k) =\sum_{j =0}^n(-\cot{kL})^{j}c_j(k) =0
\end{equation*}
with the appropriate $n$, where the coefficients $c_j(k)$ consist of all the terms in the spectral condition associated with the power of $(-\cot{kL})^{j}$. In a similar fashion, we can rewrite the resonance condition of the original graph as
\begin{equation}\label{Resonance_cond_expanded}
    F(k) = \mathrm{Re}\,F(k) + i\,\mathrm{Im}\,F(k) = \sum_{j = 0}^{\lfloor n/2\rfloor}(-1)^{j}c_{2j}+i\,\sum_{j=0}^{\lfloor n/2\rfloor}(-1)^{j}c_{2j+1} =0.
\end{equation}
Were the resonances real, that is, embedded eigenvalues, they would be given by the conditions $\mathrm{Re}\,F(k) = 0\,\land\, \mathrm{Im}\,F(k) = 0$. While those provide constraints on the signed sums of the coefficients $c_j(k)$ only, in the graph of spectral $L$-dependence any particular real value $k$ satisfying $F(k) = 0$ is bounded from one side by the crossings between the curves $\tilde{F}(k) = 0$ and $\tan{kL} = 0$, as at these points all the $c_j(k)$ except $c_0(k)$ must be decreasing to zero to account for the divergence of $\cot^j{kL}$ terms, and from the other by crossings between the curves $\tilde{F}(k) = 0$ and $\cot{kL} = 0$, because then necessarily $c_0 =0$. Should these two points be close to each other with respect to their position on the momentum axis $k$, they form nearly horizontal lines in the $L$-dependence graph, finally indicating the resonance energy if they occur at the same value of $k$.

The graph may not have real resonances, of course, and in fact the non-real ones are those we are interested in, however, we observe a similar behavior if our $k$ is sufficiently close to the real axis. The inverse $F^{-1}$ is a meromorphic function and the resonances are its poles in the lower complex half-plane. The trace of the function on the real axis is given by
\begin{equation*}
    \mathrm{Re}\,F^{-1}(k) = \frac{\mathrm{Re}\,F(k)}{|\mathrm{Re}\,F(k)|^2+|\mathrm{Im}\,F(k)|^2}, \quad k\in\R,
\end{equation*}
which suggests to look for the real part of the resonance positions at zeros of $\mathrm{Re}\,F(\cdot)$. The trouble is that in general the real parts of $F(\cdot)$ zeros do not coincide with real zeros of $\mathrm{Re}\,F(\cdot)$. This is not a big obstacle, however, because significant resonances are only those in which the pole term of $F^{-1}(\cdot)$ dominates, typically when the pole is close to the real axis. Let $k_0=\delta-i\eta$ be a zero of $F(\cdot)$, then Taylor expansion of $F$ at $k_0$ gives
\begin{equation*}
    \mathrm{Re}\,F^{-1}(k) = c\,\frac{k-\delta}{(k-\delta)^2+\eta^2}\, \big(1+\OO(((k-\delta)^2+\eta^2)^{-2}\big)
\end{equation*}
with an appropriate constant $c$, hence for such resonances the zeros of $\mathrm{Re}\,F(\cdot)$, which contain $c_0(\cdot)$, determine the real part of the poles in the leading order, and the values of $\mathrm{Im}\,F(\cdot)$, composed of the odd-numbered $c_j(\cdot)$, characterize the resonance widths.

Let us illustrate these claims in a particular example, namely the cross-shaped resonator considered in the previous section; for definiteness we choose the crossbar length $l=1$ and strength parameter value $\alpha = 1$. Note that the system has interesting features, for instance it exhibits a nontrivial quantum holonomy \cite{CT09} with respect to adiabatic changes of parameter $\lambda$ as a particular resonance trajectory in Figure~\ref{Resonance_travel} shows, cf.~\cite{EL10}. Figures~\ref{Resonances_lambdas_1_2} and \ref{Resonances_lambdas_2_2} show numerical solutions of the spectral condition \eqref{Cross_cond_spec} as functions of the cut-off length $L$. For several values of $\lambda$ the system has embedded eigenvalues, in particular, for $\lambda=1$ equation \eqref{Cross_cond_spec} is solved by $k = m\pi/2$, $m \in \mathbb{Z}$. We thus expect sharp resonances in the vicinity of this point, and it is indeed what we see in the top left picture of Figure~\ref{Resonances_lambdas_1_2} referring to $\lambda = 0.95$, at least for several lowest values of $|m|$; higher in the spectrum the picture becomes smeared. The spectra are computed for seven other values of $\lambda$ corresponding to resonances indicated in Figure~\ref{Resonance_travel}. It is apparent how this resonance widens and shrinks with respect to the pole distance from the real axis. Finally, in the lower right picture of Figure~\ref{Resonances_lambdas_2_2}, calculated for $\lambda = 0.03$, we see resonances in the vicinity of the points $k = m\pi$, $m \in \mathbb{Z}$, which are again the analytic solutions to $\eqref{Cross_cond_spec}$ with $\lambda = 0$.

	\begin{figure}[b]\centering
		\includegraphics[width = \textwidth]{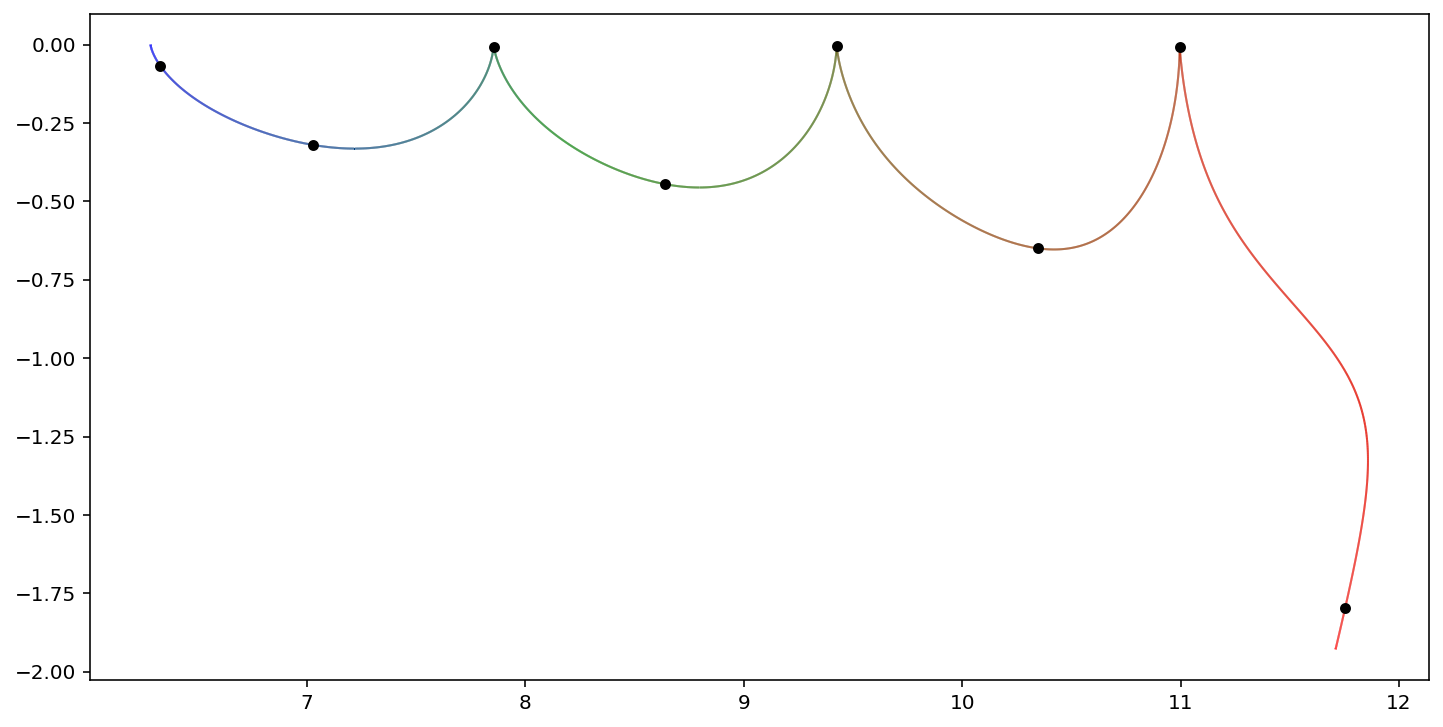}
		\caption{The trajectory of the resonance pole in the lower complex half-plane starting at $k = 2\pi$ for the cross-shaped resonator with $\alpha = 1$. The color-coding (visible online) shows the dependence on $\lambda$ changing from blue ($\lambda = 1$) to red ($\lambda = 0$), the black points on the curve correspond to the values of $\lambda$ used in Figure~\ref{Resonances_lambdas_1_2} and \ref{Resonances_lambdas_2_2}.}
		\label{Resonance_travel}
	\end{figure}

	\begin{figure}[b]\centering
		\includegraphics[width = \textwidth]{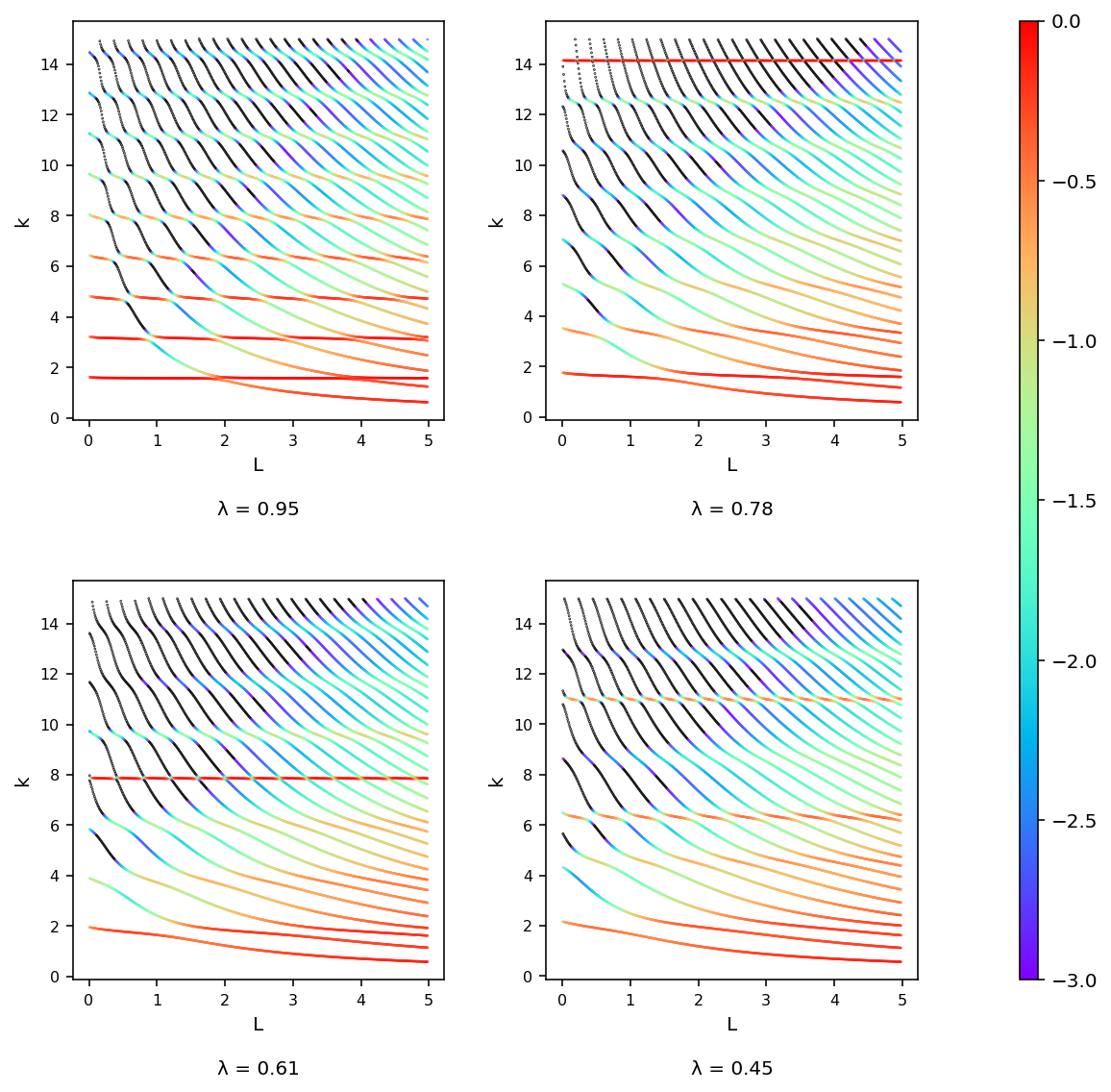}
		\caption{The numerical solutions of \eqref{Cross_cond_spec} with $\alpha= 1$ for specific values of $\lambda$. The curves are colored (visible online) according to the value of derivative -- running between $-3$ and zero -- to highlight the resonances; points with derivative outside of the specified interval are colored black.}
		\label{Resonances_lambdas_1_2}
	\end{figure}

	\begin{figure}[b]\centering
		\includegraphics[width = \textwidth]{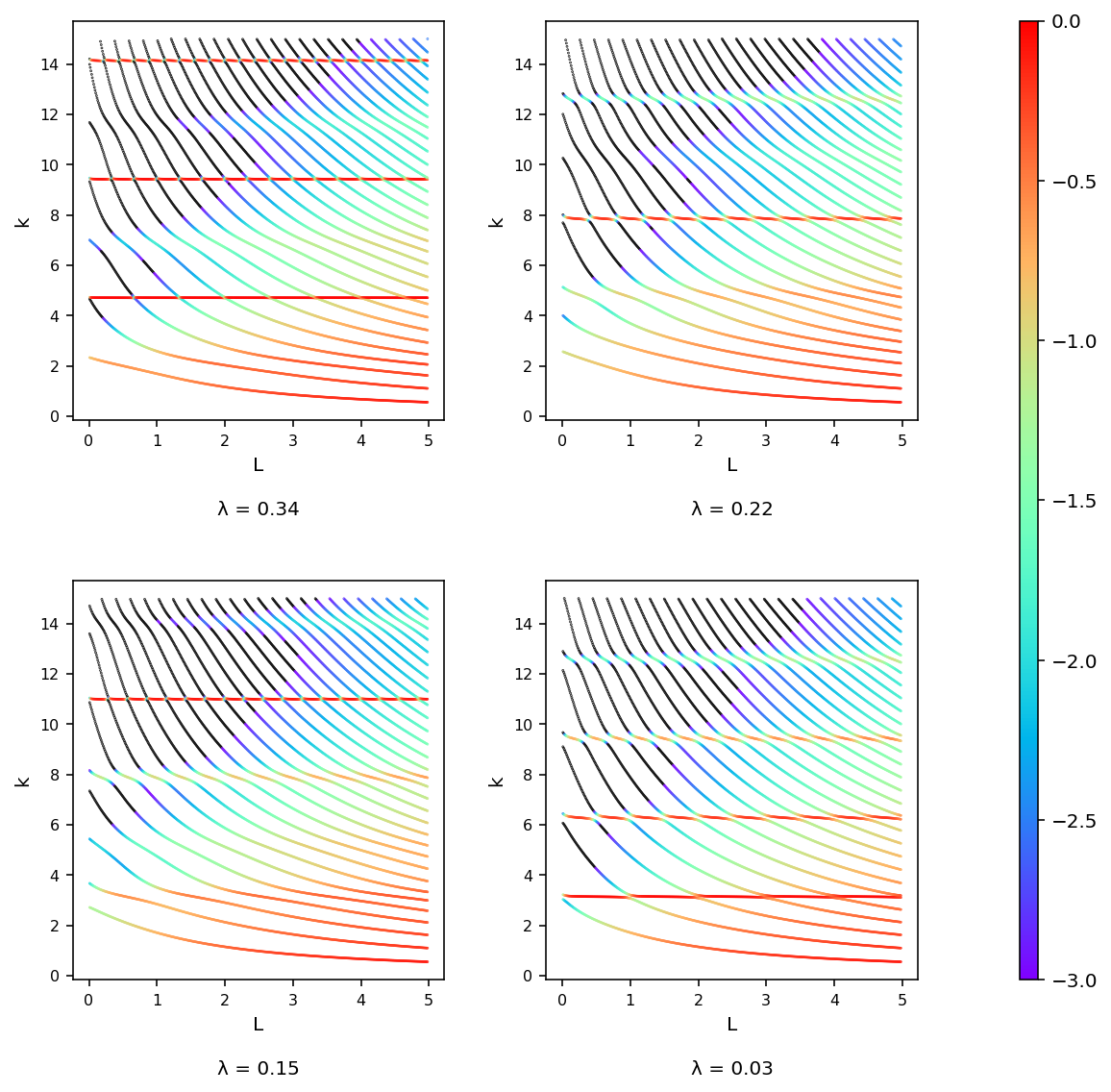}
		\caption{The numerical solutions of \eqref{Cross_cond_spec} with $\alpha= 1$ for specific values of $\lambda$. The color coding is the same as in the previous figure.}
		\label{Resonances_lambdas_2_2}
	\end{figure}

To illustrate that values of resonant energies indeed lie between crossings of the spectral curves with the curves $\cot{kL}=0$ and $\tan{kL} = 0$, consider again the resonance at $\lambda = 0.95$. It is easy to solve the equation for $\mathrm{Im}\,F(k) = 0$, which in this specific example coincides with the condition $\tan{kL} = 0$ as the resonance condition consists only of the terms $c_0(k)$ and $c_1(k)$ contributing to the $\mathrm{Re}\,F(k)$ and $\mathrm{Im}\,F(k)$ respectively: it has two roots,
 \begin{equation*}
         k = \frac{m\pi}{1+\lambda} \; \text{and} \; k = \frac{m\pi}{1-\lambda}, \; m\in \mathbb{Z}.
 \end{equation*}
Only the first is visible in Figure~\ref{Resonance_width_0_95} where it is indicated by the horizontal black dotted lines. On the other hand, the requirement $\mathrm{Re}\,F(k)=0$ gives $2k\sin{2k} +\cos{2k\lambda}-\cos{2k} = 0$ which has to be solved numerically; the result is indicated by the horizontal black dashed lines in Figure~\ref{Resonance_width_0_95}.

	\begin{figure}[b]\centering
		\includegraphics[width = \textwidth]{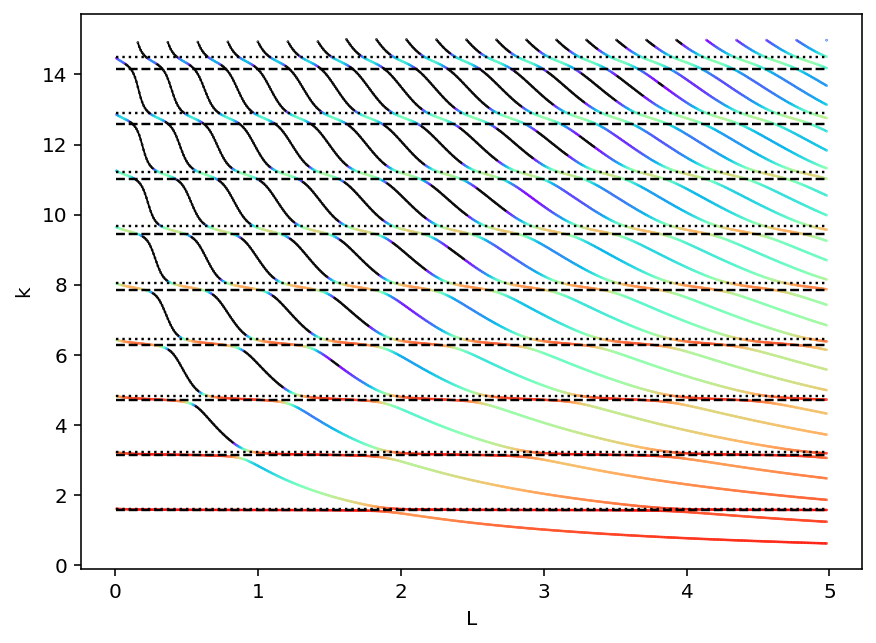}
		\caption{The numerical solutions of \eqref{Cross_cond_spec} for $\lambda = 0.95$. The color coding (visible online) is the same as in Figure~\ref{Resonances_lambdas_1_2}}.
		\label{Resonance_width_0_95}
	\end{figure}

\section*{Acknowledgements}
Thanks are due to the QMath16 organizers for distinguishing the preliminary results of this paper presented at the conference by Jan Peka\v{r}. J.L. acknowledges financial support by the Faculty of Science, University of Hradec Kr\'{a}lov\'{e}.


\end{document}